\documentclass[conference]{IEEEtran}
\IEEEoverridecommandlockouts
\usepackage{cite}
\usepackage{amsmath,amssymb,amsfonts}
\usepackage{amsthm}
\newtheorem{theorem}{Theorem}[section]
\usepackage{booktabs}
\usepackage{xurl}   

\usepackage{algorithm}
\usepackage{algpseudocode}
\usepackage{graphicx}
\usepackage{textcomp}
\usepackage{xcolor}
\usepackage{tikz}
\usepackage{tikz-cd}
\usepackage{algorithm}
\usepackage{booktabs}
\usepackage{tabularx}
\usepackage{array}

\newcolumntype{Y}{>{\raggedright\arraybackslash}X}

\def\BibTeX{{\rm B\kern-.05em{\sc i\kern-.025em b}\kern-.08em
    T\kern-.1667em\lower.7ex\hbox{E}\kern-.125emX}}
\begin{document}

\title{Secure AltDA Integration for Ethereum L2s: An End-to-End Validation Framework\\
}

\author{

\IEEEauthorblockN{1\textsuperscript{st} Bowen Xue}
\IEEEauthorblockA{
Eigenlabs, Inc \\
Seattle, WA }
\and
\IEEEauthorblockN{2\textsuperscript{st} Samuel Laferriere}
\IEEEauthorblockA{
Independent \\
New York, NY \\}
}

\maketitle

\begin{abstract}
Alternative data availability (AltDA) systems provide Ethereum L2s with an external data publication layer for high throughput rollup designs. By moving bulk data publication outside of Ethereum, AltDA allows L2s to process more data than native DA. However, this replacement introduces a new consensus critical integration layer. Existing ecosystem frameworks identify high level risks, such as external DA trust assumptions and the presence or absence of a DA verifier, but do not provide a complete specification for how an L2 should integrate with AltDA. This gap can lead to L2 halts, inconsistent derivation across honest L2 nodes, invalid state assertions, or bridge attacks. This paper presents a canonical validation framework for secure AltDA integration. We model the boundary as a typed, deterministic, and total translation from L1 inbox bytes to an AltDA commitment, then to externally available data, and finally to the rollup payload consumed by the rest of core L2s logic. The central principle is that every adversarial input must lead to a defined unique outcome. We show how missing obligations lead to concrete failure modes, including underconstrained settlement, derivation halts, inconsistent honest node behavior, invalid state assertions, and bridge safety failures. We then apply the framework to representative AltDA integration architectures, including Celestia-Blobstream, EigenDA based designs, and Avail-ZKsync. Our evaluation shows that secure AltDA integration is not determined solely by the DA provider or bridge. The surrounding L2 integration must also enforce the full validation relation connecting L1 inbox inputs to accepted L2 state.
\end{abstract}

\begin{IEEEkeywords}
alternative data availability, rollups, Ethereum, blockchain security, data availability verifier, layer2, secure integration, modular blockchain
\end{IEEEkeywords}

\section{Introduction}

Ethereum scales execution through Layer 2 (L2) systems. An L2 is itself a
blockchain, but its state machine is executed outside Ethereum consensus rather
than by Ethereum validators as part of L1 block validation. Ethereum instead
acts as a settlement layer: it orders rollup inputs and enforces a mechanism for
accepting or rejecting claimed L2 state roots. Depending on the rollup design,
this mechanism may be optimistic, validity based, or a hybrid of the two. This
separation between L2 execution and L1 settlement lets rollups amortize L1 costs
across many transactions and specialize their execution environments, making L2
execution cheaper than executing the same workload directly on Ethereum L1.

This separation only works when the data needed to reconstruct the L2 state
remains available. Honest participants must be able to derive the L2 chain,
challenge invalid state assertions, or generate validity proofs. Native
Ethereum data availability provides this property by requiring rollup data to be
published through Ethereum calldata or blobs.
However, Ethereum's native DA capacity is limited. 
After the Fusaka upgrade\cite{fusaka_upgrade}, Ethereum supports up to
21 blobs per 12 second slot. Since each blob is 128 KiB, this corresponds to
roughly 0.219 MiB/s of blob data. These limits define the operating boundary for
standard Ethereum DA rollups.
This capacity is large enough for many rollups, but it becomes a bottleneck as L2 execution engines target much higher throughput, including designs that target GigaGas-per-second execution~\cite{megaeth_gas,rise_gas,conduit_g3_gas}.

Alternative data availability systems address this bottleneck by moving bulk rollup data outside Ethereum. Instead of posting the entire batch to Ethereum, an AltDA backed rollup publishes data to an external DA layer, such as Celestia\cite{celestia}, EigenDA\cite{eigenda}, Avail\cite{avail_DA}, or a committee based DA system, and posts a compact reference to Ethereum.
This design is no longer merely experimental. According to ecosystem dashboards, AltDA backed L2s already secure more than 800 millions value~\cite{l2_beat_tvl} at the time of writing. As a result, the correctness of AltDA integration has become a direct security concern for L2s and bridges.

Moving data outside Ethereum changes the L2's security boundary. In a native DA rollup, the data consumed by derivation is the data ordered and made available by Ethereum. In an AltDA L2, Ethereum orders an reference object, while the full data lives in an external DA system. The L2 integration must therefore connect three domains: the object visible on L1, the external data authenticated by the DA system, and the rollup payload or execution trace that determines the state accepted by settlement.

This connection is easy to underspecify. A DA verifier\cite{l2beat_da_verifier} may authenticate a statement about external data without proving that
the L2 state is derived from the same data. A commitment may have been valid when it was created but unsafe to consume after the DA system has pruned the data. An available blob may be correctly bound to a DA commitment while still failing to decode into a unique rollup payload. 
These are not merely software engineering edge cases. They are knowledge gaps that affect whether honest nodes agree on derivation and whether the settlement contract accepts only state transitions constrained by the authenticated DA data.

Existing ecosystem classifications~\cite{l2beat_da_verifier} identify important
high level risks, such as whether an L2 uses external DA, whether a DA verifier (bridge)
exists, and what trust assumptions the DA provider introduces.
A prior online
discussion also raised informal concerns about secure AltDA
integration~\cite{ethresearch-altda-framework}.
These discussions are valuable,
but they do not fully specify the end-to-end relation that an AltDA L2 must enforce.
A DA verifier answers a DA provider level question: whether a fact about an external DA system can be authenticated on Ethereum. 
Secure rollup integration requires a different question: whether every adversarial input at the L2's DA boundary leads to a deterministic and settlement enforced result.

This paper argues that secure AltDA integration should be understood
an end-to-end validation problem.
The key point is that the validation boundary starts with arbitrary bytes
ordered by the rollup's L1 inbox, not with an already well formed DA commitment.
A secure AltDA integration must deterministically classify each such input as a
unique rollup payload, a rejection, or a specified stall condition. These
obligations may be enforced in derivation, in a fault proof VM, in a validity
proof, or at the inbox itself.
%
This paper makes the following contributions:
\begin{itemize}
    \item We introduce a typed validation model for AltDA L2s that
    separates L1 inbox bytes, AltDA commitments, external blobs, and rollup
    payloads.

    \item We identify the core semantic obligations required for secure AltDA
    integration: parsing, DA verification, recency, commitment-to-blob binding, payload
    uniqueness, and deterministic failure handling. 

    \item We show how these obligations can be enforced at different points in
    the rollup settlement path, including interactive fault proofs, validity
    proofs, and eager inbox verification.
    
    \item We derive an attack taxonomy showing how missing validation
    obligations can lead to underconstrained settlement, derivation halts,
    honest node divergence, or unchallengeable state transitions.

    \item We evaluate representative AltDA integrations through this framework
    and identify which obligations are enforced by the DA bridge, the inbox,
    the derivation pipeline, or the proof system.
    
\end{itemize}

This paper does not study DA layer itself. It studies the security critical
interface between external DA systems and L2 settlement. Our main claim is
that secure AltDA integration is an end-to-end property of the inbox,
derivation, proof, and settlement path, not merely a property of the DA provider
or verifier.

\section{Background and Threat Model}
\label{sec:background}

\subsection{Rollup Derivation and Settlement}
\label{sec:background:derivation}

A rollup~\cite{op_spec,arb_doc} maintains an L2 chain whose
canonical history is derived from data ordered by an underlying L1 chain. Unlike
an independent blockchain, where validators directly determine the canonical
chain, a rollup's L2 history is reconstructed by applying a deterministic
derivation function to L1 ordered inputs.

The main L1 input source is the rollup inbox. Informally, the inbox is the L1
contract interface or data location through which rollup inputs become available
to L2 nodes. It may include sequencer batches, deposits, forced transactions,
upgrade messages, and other system inputs that affect L2 derivation. Honest L2
nodes monitor the canonical L1 chain, identify rollup relevant inbox data, and
apply the rollup's derivation rules.

The sequencer collects L2 transactions, orders them, and submits batches to the
rollup's L1 input source. Batch data may be included directly as calldata or
carried as Ethereum blob data~\cite{eip_4844}. The sequencer proposes an
ordering, but it does not unilaterally define the canonical L2 chain. A batch
becomes part of the L2 history recognized by honest nodes only when it is posted
to the appropriate L1 input source and accepted by the rollup's derivation
rules.

Rollups ultimately use L1 to enforce or finalize claims about L2 state. In an
optimistic rollup, a proposer posts an asserted L2 state root to L1, and the
assertion is accepted after a challenge window unless successfully disputed. In
a validity rollup, the L1 settlement contract accepts state updates only with a
validity proof showing that the transition follows the rollup's rules. Hybrid
designs may combine these mechanisms. In all cases, the settlement protocol
relies on a well defined relation between L1 ordered input data and the accepted
L2 state.

We use L2 as the default term for the systems studied in this paper, including
both native DA and AltDA integrations. We occasionally use rollup terminology when we specifically discuss the integration with native DA, or when referring to standard components shared across nativeDA and AltDA, such as the
rollup inbox, rollup payload.

\subsection{AltDA Commitments and DA Verifier}

An alternative data availability system allows an L2 to publish bulk data
outside Ethereum's native DA layer while posting only a compact reference or
commitment to Ethereum. We use \(\mathsf{AltDACommitment}\) as the general term
for the structured object consumed from or derived from the L1 inbox. Depending
on the DA system, an \(\mathsf{AltDACommitment}\) may be a certificate containing an aggregated signature, an inclusion proof
reference, a committee attestation, a bridge verified commitment, or a pointer
to data under an authenticated DA root.
The important difference between AltDACommitment
from
native DA is that Ethereum orders it, but Ethereum consensus does not
by itself verify how recent and if the AltDACommitment corresponds to some blob that indeed has been stored by the DA system.
A DA Verifier is a smart contract deployed in Ethereum, that has abilities to verify if certain AltDACommitment has been stored and attested by the AltDA network.

\subsection{Threat Model}

The adversary may control the sequencer, batcher, or any actor capable of influencing bytes posted to the L1 inbox. It may post arbitrary byte strings, including malformed commitments, invalid attestations, stale commitments, incorrect metadata, or commitments to data that cannot be interpreted as valid rollup input. The adversary may also operate malicious or unreliable blob serving infrastructure. It may refuse to serve data, serve incorrect data, serve data that is correctly bound but semantically invalid for the rollup, or provide different responses to different honest nodes.

The adversary may also submit malicious L2 state claims to settlement. We assume the cryptographic assumptions of the underlying DA system hold. In particular, we do not consider attacks that break the binding of the commitment scheme\cite{kzg} or the security of the DA attestation mechanism. Our focus is the integration boundary between the DA system and the L2.

\section{Canonical Model for AltDA Integration}
\label{sec:model}
    \subsection{Why AltDA Requires Additional Validation}

AltDA differs from native Ethereum DA in two ways: validity and atomicity.

With Ethereum native DA, data availability validity is enforced by Ethereum
consensus. If a rollup batch is published through calldata or an Ethereum blob
transaction, the data needed by the rollup is part of Ethereum's native DA model. A
block containing a blob transaction is valid only if the corresponding blob is
available according to Ethereum consensus\cite{eip_4844}. Thus, an L2 consuming native
Ethereum DA inherits the DA validity of the input data from Ethereum itself.

With AltDA, Ethereum orders only an object that refers to external data. Ethereum consensus does not by itself establish that the external data was published, attested, retained, or retrievable under the rules of the external DA system. The integration must introduce a verification mechanism, which may be implemented by an L1 inbox contract, a DA verifier contract, L2 derivation software, a fault proof VM, a validity circuit, or a combination of these components.

The second difference is atomicity. In Ethereum native DA, ordering and data
availability are tied together by Ethereum block validity. In AltDA, this
atomicity is not automatic. A sequencer may obtain or construct an AltDA
commitment at one time and post it to the L1 inbox at another time. A commitment
that was once valid may no longer be safe to consume later, especially if the DA
system has pruned the data or if honest parties no longer have sufficient time
to retrieve the data and participate in the L2's dispute or proof process. The
integration must therefore define a recency rule for commitments consumed by
the L2.

These differences motivate two core obligations. DA verification restores the missing validity relation between the L1 visible object and the external DA system. Recency restores the missing timing relation between L1 ordering and external availability.

\subsection{Typed Boundary}

We model the AltDA boundary as a sequence of typed objects:
\[
\resizebox{0.45\textwidth}{!}{$
\mathsf{InboxBytes}
\longrightarrow
\mathsf{AltDACommitment}
\longrightarrow
\mathsf{Blob}
\longrightarrow
\mathsf{RollupPayload}.
$}
\]

Although written as four typed objects, this pipeline connects three
semantic domains. The first domain is the L1 inbox domain: \(InboxBytes\)
are ordered by Ethereum as long as it is a valid eth transaction, 
but the transaction can be adversarial in content. The second domain is the
AltDA domain: \(AltDACommitment\) and \(Blob\) are interpreted under the
semantics of the external DA system. The third domain is the rollup derivation
domain: \(RollupPayload\) is the input consumed by the ordinary L2 derivation
and execution logic.

Thus, the core AltDA semantics live in the middle domain, while the two outer
transitions glue that domain to the surrounding L2 system. This distinction
is important because validity in one domain does not imply validity in another:
a valid AltDA commitment does not by itself determine a valid rollup payload,
and a payload consumed by execution is secure only if settlement enforces its
relation to the accepted AltDA commitment.

\subsection{AltDA Semantics: Verification, Recency, and Binding}
The AltDA specific part of the validation relation can be written as:

\[
\begin{tikzcd}
\mathsf{AltDACommitment} \arrow[d, "\mathsf{verify/validDA}"] \\
\mathsf{VerifiedCommitment} \arrow[d, "\mathsf{recent}"] \\
\mathsf{RecentVerifiedCommitment} \arrow[d, "\mathsf{retrieve/bind}"] \\
\mathsf{Blob}.
\end{tikzcd}
\]

The \(validDA\) predicate establishes that the commitment is valid under the external DA system. This may mean that a quorum signed a certificate, that a data root was finalized and bridged to Ethereum, or that an inclusion proof verifies under an authenticated DA root.

A \(recent\) predicate is defined relative to both the external DA system
and the rollup settlement window. Let \(t_{\mathsf{DA}}(c)\) denote the DA layer
publication or attestation time of commitment \(c\), \(t_{\mathsf{L1}}(c)\)
denote the L1 time at which \(c\) is posted on L1, and
\(W_{\mathsf{settle}}\) denote the maximum time honest parties may need to
retrieve the blob and use it in derivation, proof generation, or dispute. If
the DA layer retains data for \(W_{\mathsf{DA}}\), then a typical recency rule
requires
\[
    t_{\mathsf{L1}}(c) - t_{\mathsf{DA}}(c) + W_{\mathsf{settle}}
    \leq W_{\mathsf{DA}} .
\]
The exact clocks and margins are integration specific, but the purpose is
general: a commitment should be accepted only while the corresponding data is
still expected to be retrievable by honest parties.

The \(bind\) predicate establishes that the retrieved blob is the blob committed by the accepted AltDA commitment. 
For a commitment \(c\) and blob \(b\), the integration must establish
\[
    \mathsf{bind}(c,b)=1
\]
Formally,
for any commitment \(c\), there is only negligible probability $\varepsilon(\kappa)$ for all adversary to find distinct
blobs \(b \neq b'\) that commits to \(c\), where $\varepsilon(\kappa)$ measure bits of security
\begin{equation}
\label{eq:binding}
\Pr\left[
    \mathsf{bind}(c,b)=1
    \land
    \mathsf{bind}(c,b')=1
    \right]
\leq \varepsilon(\kappa),
\end{equation}

This assumption may be instantiated by KZG commitments~\cite{kzg}, Merkle commitments, or other authenticated data structures.

In summary, those predicates cover the additional validation required by the AltDA, and they ensure only one blob can be provided for the downstream L2 derivation, which ensures the uniqueness.

\subsection{Total Validation}

The integration must ensure every possible inputs are met with a well defined, deterministic output.
We model derivation based validation as a total function
\[
    F :
    \mathsf{InboxBytes}
    \rightarrow
    \mathsf{RollupPayload}
    \cup
    \{\mathsf{Drop}, \mathsf{Stall}\}.
\]
For an inbox byte string \(x\), \(F(x)=\mathsf{RollupPayload}(p)\) means that
\(x\) deterministically yields the rollup payload \(p\). \(F(x)=\mathsf{Drop}\)
means that \(x\) is invalid and must be skipped. \(F(x)=\mathsf{Stall}\) means
that the input is not yet safe to consume, for example because it refers to a
valid and recent commitment whose blob is temporarily unavailable due to RPC infrastructure issue or outage of the AltDA system.

The distinction between \(Drop\) and \(Stall\) is important. Invalid data should not halt derivation forever. It should be rejected deterministically.
The \(Stall\) outcome is a bounded liveness condition, not an unbounded
consensus state. Implementations should pair it with retry and timeout rules so
that temporary unavailability is handled according to the L2's specified
liveness policy.

\begin{algorithm}[t]
\caption{AltDA validation semantics}
\label{alg:validation}
\begin{algorithmic}[1]
\Require \(x \in \mathsf{InboxBytes}\)
\Ensure one of \(\mathsf{RollupPayload}(p)\), \(\mathsf{Drop}\), or \(\mathsf{Stall}\)

\State \(c \gets \mathsf{parse}(x)\)
\If{\(c = \bot\)}
    \State \Return \(\mathsf{Drop}\)
\EndIf

\If{\(\neg \mathsf{valid}_{\mathsf{DA}}(c, s)\)}
    \State \Return \(\mathsf{Drop}\)
\EndIf

\If{\(\neg \mathsf{recent}_{\mathsf{L2}}(c)\)}
    \State \Return \(\mathsf{Drop}\)
\EndIf

\State \(b \gets \mathsf{retrieve}(c)\)
\If{\(b = \mathsf{Unavailable}\)}
    \State \Return \(\mathsf{Stall}\)
\EndIf

\If{\(\neg \mathsf{bind}(c,b)\)}
    \State \Return \(\mathsf{Drop}\)
\EndIf

\If{there is no unique \(p\) such that \(\mathsf{payloadRel}(b,p)\)}
    \State \Return \(\mathsf{Drop}\)
\EndIf

\State \Return \(\mathsf{RollupPayload}(p)\)
\end{algorithmic}
\end{algorithm}

Algorithm~\ref{alg:validation} presents the validation semantics. The relation \(payloadRel(b,p)\) captures the L2 specific codec: the blob \(b\) must decode, under the L2's canonical rules, to exactly one payload \(p\). This condition rules out non-canonical encodings, ambiguous padding, implementation dependent parsing, or payload formats that can cause honest nodes to diverge.

\begin{theorem}[Total validation and uniqueness]
For every inbox byte string \(x \in \mathsf{InboxBytes}\),
Algorithm~\ref{alg:validation} returns exactly one of
\(\mathsf{RollupPayload}(p)\), \(\mathsf{Drop}\), or \(\mathsf{Stall}\).
If it returns \(RollupPayload(p)\) for commitment \(c\), then, except with negligible probability under the binding assumption, the accepted commitment \(c\) determines a unique rollup payload \(p\).
\end{theorem}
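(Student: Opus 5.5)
The proof has two parts: totality of Algorithm~\ref{alg:validation}, and uniqueness of the payload.

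\textbf{Totality.} I would argue by structural inspection of the algorithm as a straight-line sequence of guarded returns. I would assume, as the model intends, that each primitive $\mathsf{parse}$, $\mathsf{valid}_{\mathsf{DA}}$, $\mathsf{recent}_{\mathsf{L2}}$, $\mathsf{retrieve}$, $\mathsf{bind}$, and $\mathsf{payloadRel}$ is a total, deterministic, terminating function on its domain. Here $\mathsf{parse}$ returns either a commitment or $\bot$, and $\mathsf{retrieve}$ returns either a blob or $\mathsf{Unavailable}$.

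Given this, I would enumerate the control-flow paths. There are exactly seven: one for each of the six guarded conditions (lines for $\bot$, $\neg\mathsf{valid}_{\mathsf{DA}}$, $\neg\mathsf{recent}$, $\mathsf{Unavailable}$, $\neg\mathsf{bind}$, and non-unique payload), plus the final fall-through. Each path ends in exactly one \textbf{return} statement, and the paths are mutually exclusive because the first satisfied guard terminates execution. Hence the algorithm outputs exactly one element of $\mathsf{RollupPayload}\cup\{\mathsf{Drop},\mathsf{Stall}\}$.

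I would also note the following about the payload path. Reaching the final line requires the uniqueness guard to fail. So there is a unique $p$ with $\mathsf{payloadRel}(b,p)$, and $\mathsf{RollupPayload}(p)$ is well defined rather than a choice among candidates.

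\textbf{Uniqueness.} Suppose the algorithm returns $\mathsf{RollupPayload}(p)$ with commitment $c$ and retrieved blob $b$. Then $\mathsf{bind}(c,b)=1$ and $p$ is the unique payload related to $b$. Now suppose some other execution (another honest node, or a different blob server response) also accepts $c$, with retrieved blob $b'$ and payload $p'$. That execution likewise has $\mathsf{bind}(c,b')=1$.

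I would reduce to the binding assumption~\eqref{eq:binding}. If $b\neq b'$, the pair $(b,b')$ is a binding collision for $c$. An adversary serving both blobs would therefore witness the event in~\eqref{eq:binding}, which occurs with probability at most $\varepsilon(\kappa)$. Hence $b=b'$ except with negligible probability. Since $\mathsf{payloadRel}$ is a deterministic relation with a unique witness, $b=b'$ forces $p=p'$. Thus $c$ determines $p$ up to $\varepsilon(\kappa)$.

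\textbf{Main obstacle.} The delicate step is making the reduction precise. The binding assumption is stated per commitment, but the adversary chooses $c$ adaptively and may serve different responses to polynomially many honest nodes. I would first state~\eqref{eq:binding} as a game in which the adversary outputs $(c,b,b')$. Then I would observe that any divergence among honest executions on the same $c$ yields such an output directly, with no union-bound loss, because a single divergent pair suffices.

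Finally, I would remark that the $\mathsf{Stall}$ outcome depends on retrieval availability and is therefore not itself claimed to be a function of $x$ alone. Determinism in the theorem refers to the returned value conditional on the retrieval outcome. The uniqueness claim concerns only the accepting branch, where availability has already been resolved.
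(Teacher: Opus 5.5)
Your proposal is correct and follows the same two-part argument as the paper: a case analysis over the guarded returns for totality, then the binding assumption together with the functionality of $\mathsf{payloadRel}$ for uniqueness. You go slightly beyond the paper's proof by stating explicitly that each primitive must itself be total and terminating, and by phrasing binding as an adaptive game in which the adversary outputs $(c,b,b')$; the paper leaves both points implicit.
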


\begin{proof}
Algorithm~\ref{alg:validation} evaluates a finite sequence of validation cases.
Parsing failure, DA verification failure, stale commitment, binding failure,
and non-unique or failed payload interpretation return \(\mathsf{Drop}\).
Temporary unavailability of the blob for an otherwise valid and recent
commitment returns \(\mathsf{Stall}\). If all validation checks succeed, the
algorithm returns the unique \(\mathsf{RollupPayload}(p)\) selected by
\(\mathsf{payloadRel}\). These cases cover every branch of the algorithm, and
each branch returns exactly one output.

For uniqueness, suppose the algorithm returns \(\mathsf{RollupPayload}(p)\)
for a commitment \(c\). The algorithm only reaches this branch after retrieving
a blob \(b\), checking \(\mathsf{bind}(c,b)\), and checking that exactly one
payload \(p\) satisfies \(\mathsf{payloadRel}(b,p)\). By the binding assumption,
no adversary can produce a distinct blob \(b'\) that also binds to \(c\), except
with negligible probability. By functionality of \(\mathsf{payloadRel}\), the
accepted blob determines at most one rollup payload. Therefore the accepted
commitment uniquely determines the returned rollup payload, except with
negligible probability.
\end{proof}

The theorem is intentionally simple: it states the local correctness property that the validation relation must provide. The security significance comes from enforcement of all the constraints.
If settlement omits any part of this relation, a valid DA commitment may become disconnected from the state transition accepted on L1.
Section   \ref{sec:integration_with_L2_settlement} reveals a key challenge for implementing \(validDA\) predicate.

\subsection{Semantic Obligations, Not Required Modules}

The validation steps in Algorithm~\ref{alg:validation} are semantic obligations, not necessarily separate software modules. An implementation may combine verification and recency, or enforce verification and binding inside a fault proof step. A validity proof may prove the DA predicate and payload transformation in one circuit. What matters is not where the checks appear, but whether the settlement path enforces the same end-to-end relation.

\section{Integration with L2 Settlement}
\label{sec:integration_with_L2_settlement}

Algorithm~\ref{alg:validation} describes the local validation relation. This section explains where that relation can be enforced in an end-to-end L2 system.


%

At a high level, an L2 is designed so that Ethereum maintains an authenticated view
of the current L2 state.
This is typically implemented by an L1
settlement contract to which proposers periodically submit claimed L2 state
roots. 
%
%
The contract does not reexecute the full L2 state transition function
in the normal case. Instead, it acts as an \emph{L2 light client}: a
lightweight verifier that tracks accepted L2 state and rejects incorrect state
claims through a fault proof, validity proof, or hybrid mechanism; see
Figure~\ref{fig:l1_l2_boundary}.

For an AltDA L2, this settlement
path must enforce not only execution correctness, but also the DA relation used
to derive the executed payload, as shown in Algorithm~\ref{alg:validation}. Otherwise, L1 may accept a state transition that
is not constrained by the AltDA commitment posted to the inbox.

The main challenge comes from the \(validDA\) predicate. Validating
an \(AltDACommitment\) often requires L1 state, such as DA verifier
contract on Ethereum, while
the validation itself needs to be a part of L2 consensus. The rest of this
section describes three patterns for connecting relevant L1 state information
to L2 DA verification: interactive fault proof integration,
validity proof integration, and eager inbox verification.

\begin{figure}[h]
    \centering
    \includegraphics[width=0.48\textwidth]{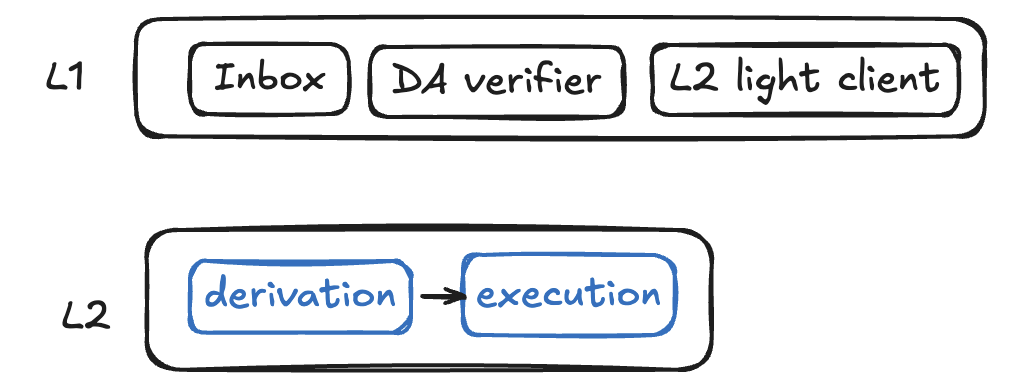}
    \caption{Separation between L1 and L2 for AltDA integration, the limitation of getting L1 state to L2 constitutes the largest challenge for most AltDA integrations.}
    \label{fig:l1_l2_boundary}
\end{figure}

\subsection{Interactive Fault Proof Integration}

In an optimistic rollup with interactive fault proofs, a proposer posts a claimed L2 state root. A challenger can dispute the claim, and the parties bisect the execution trace until the dispute reduces to a single machine step that can be resolved onchain.

In this model, AltDA validation can be carried as a part of the disputed state transition relation.
If the derivation of the challenged L2 state depends on an AltDACommitment, then
the fault proof system must be able to determine whether that commitment is valid.
One natural design is to place the AltDA verifier on L1 and allow the L2 light
client to call it when the disputed machine step reaches the AltDA validation
logic.

\subsection{Validity Proof Integration}
\label{sec:e2d_validity_proof}

In a validity L2, the settlement contract does not replay a disputed execution step. Instead, it verifies a succinct proof that the prover correctly performed the relevant derivation and execution.

For AltDA, the validity proof must include the DA relation in its statement. It is not sufficient to prove that some rollup payload leads to a claimed state transition. The proof must also establish that the payload was derived from data authenticated by the accepted AltDA commitment as shown in Algorithm\ref{alg:validation}. Because the DA verifier depends on Ethereum state, the prover often have to use storage proofs on ethereum merkle patricia trie\cite{merkle_patricia_trie} to show that the relevant verifier state existed at a particular L1 block. The proof system can then use the proven L1 state to check the DA predicate against those authenticated inputs.

\subsection{Eager DA Verification}

A third pattern is to verify AltDA commitments before they enter rollup derivation. This is most natural when the rollup inbox is a smart contract. The inbox contract can call the DA verifier at submission time and record only verified inbox entries. L2 derivation then consumes pre-validated objects rather than arbitrary commitments.

Eager verification simplifies later proof obligations because the fault proof or validity proof system may only need to show that derivation consumed the verified inbox entry. The tradeoff is cost: every DA commitment must be verified on L1 at submission time. Eager verification also depends on the shape of the inbox. If the inbox is an externally owned account or another mechanism that cannot execute validation logic, this pattern may not apply directly.

This is the easiest integration approach, and has been used by both EigenDA and Celestia in the beginning. But both have replaced this approach in later stages.
But this approach does not work for OP stack given that their inbox is an EOA, and not a smart contract that can execute logic.







\section{Attacks from Requirement Violations}
\label{sec:attacks}

This section derives concrete attacks from violations of the validation requirements. The examples are provider agnostic. They arise whenever settlement accepts an L2 state transition without enforcing the full relation among inbox input, DA commitment, external data, and rollup payload.

\subsection{Under-Constrained Data Acceptance}
\label{sec:attacks:binding}

The most direct integration failure is accepting an L2 state transition without
enforcing that the data used for derivation or execution is bound to the
accepted AltDA commitment. 

Suppose the L1 inbox contains bytes corresponding to a valid commitment \(c\). A DA verifier establishes that \(c\) is valid, but the settlement relation does not require the claimant to show \(bind(c,b)\) for the blob \(b\) actually consumed by the L2. A malicious party can then derive or prove a state transition using different data. If settlement accepts the resulting state transition, the accepted L2 state is no longer constrained by the data authenticated by the DA verifier.

This failure does not require breaking the cryptographic commitment scheme. The scheme may be binding, yet settlement may never check the binding relation for the blob used by execution. The result is an underconstrained settlement predicate: a DA commitment exists, but it is not connected to the state transition being accepted.

This root cause can lead to several concrete outcomes. First, an adversary may
post an invalid state assertion that cannot be defeated under the settlement
rules. Honest challengers may still raise a challenge, but if the dispute game
or validity verifier does not check the commitment-to-blob relation, the
malicious assertion may win.

Second, the adversary may create conflicting state claims from the same apparent
DA commitment. For example, two different blobs may lead to two different L2
states while both appear compatible with the same accepted inbox input. If the
bridge or settlement protocol does not enforce a unique data relation, the
adversary may be able to use inconsistent accepted states to withdraw funds or
otherwise exploit bridge accounting. The specific mechanics depend on the
settlement protocol, but the underlying violation is the same: the accepted
commitment does not uniquely constrain the L2 state transition.

\subsection{Undefined Failure Handling}

A second class of attacks arises when   invalid inputs are not handled as normal
adversarial inputs. A malicious batcher may post arbitrary bytes to the L1
inbox: an object that is too short, missing required fields, incorrectly
encoded, or not containing an AltDA commitment at all.  It may also post a well formed commitment with invalid DA attestation.

All such inputs should be rejected deterministically. If parsing or verification
failure instead causes the derivation software, proof program, or verifier to
panic, abort, or enter implementation specific behavior, the adversary can turn
invalid data into a liveness or safety attack.

In an optimistic derivation based system, this can become a bridge attack. The
malicious batcher first posts a malformed or invalid commitment that causes
honest derivation software to halt. The malicious proposer then posts an invalid
L2 state root to L1. Since honest nodes cannot continue deriving the correct L2
chain, honest challengers may be unable to compute or submit the challenge
needed before the challenge window expires. The invalid state root may then
finalize.

The defense is total failure semantics. The input domain of the parser is not the set of well formed commitments. It is the set of all byte strings that can appear in the L1 inbox. Parsing or verification failure must return \(Drop\), not cause a consensus level halt.

\subsection{Valid Commitment with Undecodable Data}

A final failure mode occurs after DA verification and binding succeed. A
malicious batcher may submit a valid and recent commitment to a blob that is
available and correctly bound, but whose contents cannot be interpreted as a
canonical rollup payload.

This case is distinct from blob unavailability. If the blob cannot currently be
retrieved, the appropriate result may be \(\mathsf{Stall}\). If the blob is
retrieved and bound but fails the rollup's payload relation, the input should be
classified as invalid and return \(\mathsf{Drop}\). Examples include missing
length fields, inconsistent padding, non-canonical encodings, or data that would
cause different implementations to produce different payloads.

If the codec panics, allocates unbounded memory, or produces
implementation specific output, honest nodes may halt or diverge. The defense
is the same totality principle applied to the blob-to-payload boundary: an
available and bound blob must either determine a unique
\(\mathsf{RollupPayload}\), or it must be rejected deterministically.

\begin{table}[t]
\centering
\caption{Validation obligations and failure modes.}
\label{tab:obligation-attacks}
\begin{tabular}{p{0.28\linewidth}p{0.62\linewidth}}
\toprule
Missing obligation & Representative consequence \\
\midrule
Canonical parsing &
Adversarial inbox bytes may be interpreted inconsistently, or malformed inputs
may halt derivation instead of being rejected. \\

DA verification &
An invalid or unauthenticated commitment may support an accepted state
transition. \\

Recency &
A historically valid but expired commitment may be consumed after honest
parties can no longer retrieve or dispute the corresponding data. \\

Commitment-to-blob binding &
The payload used by derivation may not be constrained by the commitment posted
to L1. \\

Payload decoding uniqueness &
Different honest implementations may decode the same blob into different
rollup inputs. \\

Deterministic failure handling &
Invalid, stale, unavailable, or undecodable inputs may lead to
implementation specific behavior rather than canonical \(Drop\) or \(Stall\)
outcomes. \\
\bottomrule
\end{tabular}
\end{table}

Table~\ref{tab:obligation-attacks} summarizes how each validation obligation
corresponds to a representative failure mode.

\section{Evaluation of Representative Integrations}
\label{sec:real-world}

This section applies the framework to representative AltDA integration designs. The purpose is not to rank DA providers. The purpose is to identify which semantic obligations are enforced by which components, and which obligations remain integration responsibilities.

\subsubsection{Methodology.}
We evaluate public specifications, repositories, and audit or documentation
artifacts for each integration. For each obligation in
Algorithm~\ref{alg:validation}, we ask whether the examined materials identify
an enforcement point in the inbox, DA verifier, derivation pipeline, fault proof
VM, or validity proof. We use ``not identified'' to mean that we did not find
explicit enforcement in the examined materials.

\begin{table*}[t]
\centering
\caption{Applying the validation framework to representative AltDA integrations. A check mark means the examined design appears to enforce the obligation in the relevant path; ``not identified'' means we did not find explicit enforcement in the examined materials.}
\label{tab:evaluation}
\begin{tabular}{p{2.8cm}p{2.4cm}p{2.1cm}p{2.2cm}p{2.0cm}p{2.7cm}}
\toprule
Integration & DA verifier Integration Pattern & DA verifier & Recency & Binding & E2E L2s type \\
\midrule
Celestia OP / Hana\cite{hana_code} & Validity proof & Blobstream & Not identified & Not identified &  Validity L2 \\
\midrule
Celestia Arbitrum\cite{celestia_arb} & Interactive fault proof & BlobStream & Not identified & Verified by merkle opening & Optimistic L2 \\
\midrule
EigenDA OP / Hokulea \cite{hokulea_repo}& Validity proof & Canoe (by sp1cc, steel) & Part of derivation check & Verified by kzg proof  & Validity and Hybrid L2 \\
\midrule
EigenDA Arbitrum \cite{eigeda_arb}& Eager inbox verification & DA verifier & Part of inbox verification & Verified by kzg proof & Optimistic L2 \\
\midrule
Avail / ZKsync \cite{avail_DA} & Eager inbox verification & Vector & Not identified & Verified by KZG proof & Validity L2 \\
\bottomrule
\end{tabular}
\end{table*}

\subsection{Blobstream Based Celestia Integrations}

Blobstream authenticates Celestia side commitments on Ethereum. At a high level,
Celestia validators attest to Celestia data roots, and Blobstream makes those
authenticated roots available to Ethereum. An L2 can then use Blobstream to
verify positive statements about Celestia data, such as inclusion of a
namespace, share range, or blob commitment under an authenticated Celestia root.

In our framework, Blobstream contributes to \(validDA\). It helps establish that some Celestia data is included under an authenticated root. However, a positive verification interface is not by itself a total validation function over all possible L2 inbox references. If an inclusion proof fails, that failure does not necessarily prove that the data was never stored by Celestia; it may only show that the submitted proof was invalid. Therefore, the Celestia integration must devise scheme to reject invalid AltDACommitment.

For an OP Celestia integration using Blobstream, the key question is whether the settlement path proves that the data consumed by OP derivation is bound to the Celestia commitment accepted from the L1 inbox.
At the examined Hana audit snapshot, we found DA authentication but not
end-to-end enforcement of the binding relation between the authenticated
commitment and the data consumed by OP derivation~\cite{hana_audit, hana_code}.
This is not merely an implementation detail: unless the final production proof
relation enforces this predicate, settlement remains underconstrained described in Section~\ref{sec:attacks:binding}.
This illustrates the type of issue that high level DA classifications can miss: a DA verifier may exist, and a DA commitment may be authenticated, while the rollup still fails to connect the authenticated commitment to the exact data consumed by derivation.

Celestia's Arbitrum integration illustrates a different composition~\cite{celestia_arb}. Arbitrum's interactive fault proof setting can combine DA verification with a bytes-opening check inside the dispute process. In this design, Blobstream need not act as a total non-membership oracle over all possible inbox references. The fault proof integration can instead enforce the relevant DA predicate at the disputed step. Under our framework, this is a valid way to satisfy the semantic obligations, provided the dispute relation also defines recency and failure handling.

In the Celestia integrations we examined, we did not identify an explicit recency rule tied to the rollup's dispute or proof window. This does not by itself prove exploitability; it identifies a condition that should be specified and audited.

\subsection{EigenDA Certificate Based Integrations}

EigenDA differs from bridge based designs in that its stake table and verification logic are Ethereum native. An EigenDA certificate can be verified by an Ethereum contract against stake information available on Ethereum, so the DA verifier does not need a separate bridge from another consensus system.

In EigenDA Hokulea OP integration, DA verification is enforced through a validity proof path~\cite{hokulea_repo}, see Section~\ref{sec:e2d_validity_proof}. The prover uses libraries such as SP1 contract call and Steel to prove execution against Ethereum contract state~\cite{sp1_cc,steel}. The validity proof library for EiegnDA DA verifier is called Canoe\cite{eigenda_canoe}. The proof system can show that the DA verifier accepted the EigenDA certificate and that derivation and execution followed the rollup rules. Under our framework, this places the DA predicate inside the validity relation. The security requirement is that the proof statement includes not only execution correctness but also the binding between the accepted EigenDA certificate and the payload consumed by derivation.

EigenDA's Arbitrum integration follows an eager verification pattern. The inbox contract checks the AltDA commitment at submission time, and failed commitments are not accepted into the derivation pipeline. This shifts DA verification to L1 execution and simplifies later settlement obligations, at the cost of verifying each commitment onchain.

In the EigenDA integrations we examined, we did not identify a missing check under the proposed framework. This statement is not a proof of complete security; it means that the integration pattern naturally aligns with the validation obligations identified in this paper.

\subsection{Avail and ZKsync Integration}

Avail has its own consensus protocol and uses SP1-Vector\cite{sp1_vector} to make Avail commitments available to Ethereum. At a high level, Avail produces commitments to published data, while Vector allows Ethereum contracts or proof systems to authenticate statements about Avail side data.

The Avail-ZKsync integration we examined follows an eager verification pattern: the inbox contract calls the Vector contracts to validate that the relevant data commitment is authenticated before treating it as rollup input. In our framework, this contributes to \(validDA\) at the inbox boundary. However, we did not identify an explicit recency condition tied to the rollup's proof or settlement window. As with the Celestia case, this should be read as an integration question to specify and audit, not as a complete exploit claim.

\section{Relation to DA Risk Classifications}
\label{sec:discussion}

Ecosystem frameworks such as L2Beat's data availability classifications identify
provider and DA verifier level DA assumptions: where data is posted, whether a DA
bridge exists, and what trust assumptions the DA provider introduces. Our
framework is complementary. It analyzes the integration semantics layer: whether
the rollup settlement path enforces a deterministic relation from adversarial
L1 inbox input to a unique rollup payload or a canonical failure outcome.

A DA verifier may establish \(validDA(c)\), but the surrounding L2 integration
must still enforce canonical parsing, recency, commitment-to-blob binding,
payload decoding, and deterministic failure handling. Thus, two L2s using
the same DA provider and bridge may have different security properties
depending on whether these obligations are enforced in the inbox contract,
derivation pipeline, fault proof VM, or validity proof.

\section{Limitations}

This paper focuses on the integration boundary between L2s and external DA systems. It does not analyze the internal security of any DA provider, such as committee honesty, erasure coding soundness, data availability sampling, or consensus finality. We assume the cryptographic and consensus assumptions of the underlying DA system hold.

The framework also does not prescribe a single correct engineering architecture. Eager inbox verification, interactive fault proof integration, and validity proof integration can all enforce the required semantics. The correct choice depends on cost, proof system constraints, inbox design, and the availability of authenticated settlement layer state.

Finally, our real world analysis is based on examined public specifications and code paths. A missing check in the examined path should be read as an audit target unless paired with a complete exploit demonstration. The goal is to identify the semantic obligations that integrations must satisfy and to make those obligations explicit enough to be reviewed.

\section{Conclusion}
\label{sec:conclusion}

AltDA systems allow Ethereum L2s to scale beyond native Ethereum data availability by moving bulk data publication outside Ethereum. This creates a new consensus critical boundary. A DA verifier alone is not sufficient: it may authenticate a fact about an external DA system without ensuring that the L2 state accepted by settlement is derived from the same data.

This paper presented an end-to-end validation framework for secure AltDA integration. We modeled the boundary as a typed, deterministic, and total translation from L1 inbox bytes to either a unique rollup payload or an explicit failure outcome. The model identifies obligations: inbox bytes parsing, DA verification, recency, commitment-to-blob binding, deterministic payload interpretation, and total failure handling. We showed how missing obligations lead to concrete attacks and used the framework to analyze representative integration patterns.

The central lesson is that AltDA security is not a provider level property and not a single verifier call. It is a property of the complete derivation, proof, and settlement path. As AltDA L2s secure increasing value, this boundary should be specified and audited with the same rigor as the rollup execution and settlement systems themselves.

\bibliography{references.bib}

\end{document}